     \def\section{\@startsection{section}{1}%
     \z@{.7\linespacing\@plus\linespacing}{.5\linespacing}%
     {\bfseries
     \centering
     }}
     \def\@secnumfont{\bfseries}
\newtheorem{theorem}{Theorem}[section]
\theoremstyle{definition}
\theoremstyle{remark}
\numberwithin{equation}{section}
\begin{document}

\title[Real gas flows issued from a source]{Real gas flows issued from a source}

\author{Valentin Lychagin}
\address{Valentin Lychagin: V.A. Trapeznikov Institute of Control Sciences, Russian Academy of Sciences, 65 Profsoyuznaya Str., 117997 Moscow, Russia, Department of Mathematics, The Arctic University of Norway, Postboks 6050, Langnes 9037, Tromso, Norway}
\email{valentin.lychagin@uit.no}

\author{Mikhail Roop}
\address{Mikhail Roop: V.A. Trapeznikov Institute of Control Sciences, Russian Academy of Sciences, 65 Profsoyuznaya Str., 117997 Moscow, Russia, Department of Physics, Lomonosov Moscow State University, Leninskie Gory, 119991 Moscow, Russia}
\email{mihail\underline{ }roop@mail.ru}

\subjclass[2000] {76N15; 76D05; 35Q35}

\keywords{Navier-Stokes equations, real gases, phase transitions, thermodynamic states}

\begin{abstract}
Stationary adiabatic flows of real gases issued from a source of given intensity are studied. Thermodynamic
states of gases are described by Legendrian or Lagrangian manifolds. Solutions of Euler equations are given implicitly for any equation of state and the behavior of solutions of the Navier-Stokes equations with the viscosity considered as a small parameter is discussed. For different intensities of the source we introduce a small parameter into the Navier-Stokes equation and construct corresponding asymptotic expansions. We consider the most popular model of real gases --- the van der Waals model, and ideal gases as well.
\end{abstract}

\maketitle

\section{Introduction}
Stationary three-dimensional flows of viscous fluid or gas are described by the following system (see \cite{Batch} for example):
\begin{equation}
\label{BasicEqs}
\left\{
\begin{aligned}
&\rho(\mathbf{u},\nabla)\mathbf{u}=-\nabla p+\eta\Delta\mathbf{u}+\left(\frac{\eta}{3}+\zeta\right)\nabla(\mathrm{div}\mathbf{u}),\\
&\mathrm{div}(\rho\mathbf{u})=0,
\end{aligned}
\right.
\end{equation}
where $\mathbf{u}(x)$ is the velocity field, $p(x)$ is the pressure, $\rho(x)$ is the density, $x\in\mathbb{R}^{3}$, $\eta$ and $\zeta$ are viscosity coefficients, which are assumed to be constants. The first equation in (\ref{BasicEqs}) is the Navier-Stokes equation corresponding to the momentum conservation law, the second one is the continuity equation responsible for the conservation of mass. In addition to (\ref{BasicEqs}) we take the isentropicity condition:
\begin{equation}
\label{Adiab}
(\mathbf{u},\nabla\sigma)=0,
\end{equation}
where $\sigma(x)$ is the specific entropy. This condition means that the specific entropy $\sigma(x)$ is constant along the stream lines.

Exact solutions of the Navier-Stokes equations corresponding to the jets issued from a source have been of interest since the middle of the 20th century. Landau seems to be one of the first who found a solution, which can be considered as a jet in an unrestricted incompressible medium, so-called \textit{submerged jet} \cite{Lan}. It is worth to say that incompressibility condition was crucial in his approach and moreover, thermodynamics of the flow was not considered. His solution has two significant drawbacks. The first one is vanishing of the flow for inviscid fluids and the second one is zero mass flux. Landau's solution was improved by Broman and Rudenko \cite{BrRud}. Using the incompressibility condition, they showed that Landau jet is the particular case of more general flows with nonzero mass flux and admitting the passage to inviscid fluids. Thermodynamics of such flows was investigated by Squire in \cite{Squ}. He considered the heat balance equation in addition to the Navier-Stokes system and found the distribution of the temperature in the jet. He supposed that the fluid was incompressible, but since the temperature changes along the flow, one has to take into account the changing of the density, which can be done by means of equation of state. Some of invariant solutions of the Navier-Stokes equations for incompressible viscous fluid and corresponding flows are studied in \cite{MRmlsd} by means of symmetries of differential equations \cite{KLR,VinKr}.

We can see that system (\ref{BasicEqs})-(\ref{Adiab}) is incomplete and in this paper we suggest using the equations of state of gases instead of incompressibility condition to make it complete. For this reason, we recall the geometrical description of thermodynamic states in a brief manner (see \cite{Lych,LRNon,DLT} for details). The first results in investigating one-dimensional Navier-Stokes flows with equations of state taken into account are obtained in \cite{GLRT}. Here, we restrict ourselves on a case of stationary isentropic flows in three-dimensional space depending on the distance from the source. In this case we are able to construct exact solutions for Euler equations. The solutions obtained are multivalued and different branches of these solutions are defined by different conditions at infinity. In case of viscous gases, we observe step-like solutions of the Navier-Stokes equations, and a smooth jump from one branch of Euler equation onto another occurs.
\section{Thermodynamic states}
In this section we recall some basic ideas in geometrical representation of thermodynamic states following \cite{Lych,LRNon,DLT}.

Let $(\mathbb{R}^{5},\theta)$ be a contact space equipped with coordinates $(e,v,T,p,\sigma)$, where $v=\rho^{-1}$ is the specific volume, $e$ and $T$ are the specific energy and the temperature respectively, and structure 1-form
\begin{equation*}
\theta=d\sigma-T^{-1}de-T^{-1}pdv.
\end{equation*}
By a thermodynamic state we mean a 2-dimensional Legendrian manifold $L\subset(\mathbb{R}^{5},\theta)$, such that $\theta|_{L}=0$. This condition means that the first law of thermodynamics holds on $L$ and if the specific entropy $\sigma$ is given in the form $\sigma=\sigma(e,v)$, the Legendrian manifold $L$ is defined completely by relations (equations of state):
\begin{equation}\label{Leg}\sigma=\sigma(e,v),\quad T=\frac{1}{\sigma_{e}},\quad p=\frac{\sigma_{v}}{\sigma_{e}}.\end{equation}
But in physics equations of state establish the connection between measurable quantities and are provided by experiments, that is why we have to eliminate the specific entropy $\sigma$ by means of projection
\begin{equation*}
\pi\colon\mathbb{R}^{5}\to\mathbb{R}^{4},\quad \pi\colon(e,v,T,p,\sigma)\mapsto(e,v,T,p).
\end{equation*}
Then the restriction $\overline{L}=\pi(L)$ is an immersed 2-dimensional Lagrangian manifold in a symplectic space $(\mathbb{R}^{4},\Omega)$, i.e. $\Omega|_{\overline{L}}=0$, where
\begin{equation*}\Omega=-d\theta=-T^{-2}dT\wedge de+T^{-1}dp\wedge dv-pT^{-2}dT\wedge dv,\end{equation*}
The condition $\Omega|_{\overline{L}}$ can be expressed as follows:
\begin{equation}\label{Comp}[f,g]=0\text{ on }\overline{L},\end{equation}
where functions $f(e,v,p,T)$ and $g(e,v,p,T)$ define the Lagrangian manifold $\overline{L}$:
\begin{equation}\label{Lagr}\overline{L}=\left\{f=0, g=0\right\},\end{equation}
and $[f,g]$ is the Poisson bracket with respect to the structure form $\Omega$:
\begin{equation*}[f,g]\hspace{1mm}\Omega\wedge\Omega=df\wedge dg\wedge\Omega.\end{equation*}
Summarising above discussion, thermodynamic state is either Legendrian manifold $L\subset(\mathbb{R}^{5},\theta)$ defined by (\ref{Leg}), or Lagrangian manifold $\overline{L}\subset(\mathbb{R}^{4},\Omega)$ defined by (\ref{Lagr}) with condition (\ref{Comp}). The last can be used to find a \textit{caloric} equation $g=e-\beta(v,T)$ if we know a \textit{thermic} one $f=p-\alpha(v,T)$ and can also be considered as a compatibility condition for the system
\begin{equation*}
f\left(e,v,\frac{1}{\sigma_{e}},\frac{\sigma_{v}}{\sigma_{e}}\right)=0,\quad g\left(e,v,\frac{1}{\sigma_{e}},\frac{\sigma_{v}}{\sigma_{e}}\right)=0,
\end{equation*}
defining the specific entropy $\sigma(e,v)$ up to a constant.

One may show, that not all the points on $L$ correspond to thermodynamic states. Namely, the differential quadratic form $\kappa$ is defined on $L$  \cite{Lych}:
\begin{equation*}
\kappa=d(T^{-1})\cdot de+d(pT^{-1})\cdot dv,
\end{equation*}
and domains on $L$, where $\kappa$ is negative, are applicable. The jumps between applicable domains preserving intensives (temperature, pressure and chemical potential) are phase transitions of the first type \cite{Lych,LRNon}. For further purposes we need the following theorem \cite{LRNon}:
\begin{theorem}
The Legendrian manifold $L$ is defined by the Massieu-Plank potential $\phi(v,T)$:
\begin{equation}\label{LegMas}\sigma=R(\phi+T\phi_{T}),\quad p=RT\phi_{v},\quad e=RT^{2}\phi_{T}.\end{equation}
The domain of applicable states is given by inequalities:
\begin{equation}\label{applic}\phi_{vv}<0,\quad \phi_{TT}+2T^{-1}\phi_{T}>0,\end{equation}
where $R$ is the universal gas constant.
\end{theorem}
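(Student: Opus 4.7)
The plan is to verify both the parametrization (\ref{LegMas}) and the applicability criterion (\ref{applic}) by direct pull-back computations in $(v,T)$-coordinates, treating the map $(v,T)\mapsto (RT^2\phi_T,\,v,\,T,\,RT\phi_v,\,R(\phi+T\phi_T))$ as a candidate parametrization of $L$.

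For the first part, I would differentiate the three expressions in (\ref{LegMas}) and compare $d\sigma$ with $T^{-1}de + T^{-1}p\,dv$. A short calculation yields
$$d\sigma \;=\; T^{-1}de + T^{-1}p\,dv \;=\; R\bigl((\phi_v + T\phi_{vT})\,dv + (2\phi_T + T\phi_{TT})\,dT\bigr),$$
so that $\theta|_L = 0$ and the image is Legendrian. Conversely, starting from an abstract Legendrian $L$ on which $v,T$ serve as local coordinates, the condition $\theta|_L=0$ reads $\sigma_v = T^{-1}(e_v+p)$ and $\sigma_T = T^{-1}e_T$. Equality of mixed partials of $\sigma$ produces the Maxwell-type relation $e_v + p = Tp_T$, which is precisely the integrability condition ensuring existence of $\phi(v,T)$ with $\phi_v = p/(RT)$ and $\phi_T = e/(RT^2)$. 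Anti-differentiating then recovers $\sigma = R(\phi + T\phi_T)$ up to an additive constant, giving (\ref{LegMas}).

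For the second part, I would substitute $e=RT^2\phi_T$ and $p=RT\phi_v$ into $\kappa = d(T^{-1})\cdot de + d(pT^{-1})\cdot dv$ and expand the symmetric products. The mixed $dv\cdot dT$ contributions cancel by the symmetry $\phi_{vT}=\phi_{Tv}$, leaving the diagonal form
$$\kappa \;=\; R\phi_{vv}\,dv^2 \;-\; R\bigl(\phi_{TT} + 2T^{-1}\phi_T\bigr)\,dT^2.$$
Negative-definiteness of this form is exactly the pair of inequalities (\ref{applic}), which completes the characterization of the applicable domain.

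The main obstacle is purely bookkeeping: signs, powers of $T$, and the careful distinction between wedge and symmetric products must be tracked through several rounds of differentiation. No new geometric input is required beyond the contact/symplectic framework and the definitions of $\theta$ and $\kappa$ recalled above.
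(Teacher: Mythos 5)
Your proof is correct. Note that the paper itself gives no argument for this theorem --- it is quoted from the reference \cite{LRNon} --- so your direct verification actually supplies what the text omits. Both halves check out: the pull-back of $\theta$ under $(v,T)\mapsto(RT^{2}\phi_{T},v,T,RT\phi_{v},R(\phi+T\phi_{T}))$ vanishes identically, since
\begin{equation*}
T^{-1}de+T^{-1}p\,dv=R\bigl((\phi_{v}+T\phi_{vT})\,dv+(2\phi_{T}+T\phi_{TT})\,dT\bigr)=d\sigma,
\end{equation*}
and your converse direction correctly identifies the Maxwell relation $e_{v}+p=Tp_{T}$ (equality of mixed partials of $\sigma$) as the integrability condition for the potential $\phi$ with $\phi_{v}=p/(RT)$, $\phi_{T}=e/(RT^{2})$. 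For the second half, the diagonalization
\begin{equation*}
\kappa=R\phi_{vv}\,dv^{2}-R\bigl(\phi_{TT}+2T^{-1}\phi_{T}\bigr)\,dT^{2}
\end{equation*}
is right --- the cross terms carry coefficients $-R\phi_{Tv}$ and $+R\phi_{vT}$ and cancel --- and negative definiteness of a diagonal form is equivalent to negativity of both coefficients, which is precisely (\ref{applic}). The only cosmetic remark is that in the converse direction you should state explicitly that $(v,T)$ are assumed to be local coordinates on $L$ (i.e.\ $L$ projects diffeomorphically to the $(v,T)$-plane), which is the standing assumption under which the Massieu--Planck potential exists; you do mention this, so nothing is missing.
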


\section{Formulation of the problem}
Suppose that we have an isotropic source at the origin and a thermodynamic state of the gas is given by the Massieu-Plank potential $\phi(v,T)$. Then, the condition $(\mathbf{u},\nabla\sigma)=0$ leads to the constancy of the specific entropy $\sigma=R\sigma_{0}$. This allows to express all the thermodynamic values in terms of $v$. Indeed, due to (\ref{LegMas}) we have:
\begin{equation}\label{eqforT}\sigma_{0}=\phi+T\phi_{T}.\end{equation}
Since (\ref{applic}) is valid, relation (\ref{eqforT}) defines implicitly $T(v)$. Using equations of state we can find $p(v)$ as well.

All the functions in (\ref{BasicEqs}) depend on the distance from the source $r=\sqrt{x_{1}^{2}+x_{2}^{2}+x_{3}^{2}}$, because the source is isotropic:
\begin{equation*}\mathbf{u}=U(r)\mathbf{r},\quad v=v(r).\end{equation*}
Assume that the intensity of the source $J$ is given. This means that the mass flux across the sphere of radius $r$ is equal to $J$:
\begin{equation}\label{flux}J=\int\limits_{S_{r}}v^{-1}(\mathbf{u},\mathbf{n})dS=4\pi r^{3}v^{-1}(r)U(r),\end{equation}
where $\mathbf{n}=\mathbf{r}/r$. From what follows a relation between $U(r)$ and $v(r)$:
\begin{equation}\label{Uv}U(r)=\frac{Iv(r)}{r^{3}},\end{equation}
where $I=J/4\pi$, and the continuity equation in (\ref{BasicEqs}) is satisfied.

We will consider flows in an unrestricted medium $\mathbb{R}^{3}$. For this reason we have to satisfy the requirement of regularity at infinity, i.e.
\begin{equation}\label{cond}\left.\mathbf{u}\right|_{r\to\infty}=0,\quad \left.v\right|_{r\to\infty}=v_{0}.\end{equation}

Thus, the problem can be formulated as follows. We are looking for the solution of (\ref{BasicEqs}) satisfying conditions at infinity (\ref{cond}) with additional requirement (\ref{flux}).
\section{Integration of Euler equations}
Let us first consider inviscid media, i.e. put $\eta=\zeta=0$ in (\ref{BasicEqs}). We get the following system:
\begin{equation}
\label{BasicEqs1}
\left\{
\begin{aligned}
&v^{-1}(\mathbf{u},\nabla)\mathbf{u}=-\nabla p,\\
&\mathrm{div}(v^{-1}\mathbf{u})=0,\\
&(\mathbf{u},\nabla\sigma)=0.
\end{aligned}
\right.
\end{equation}

\begin{theorem}
General solution of problem (\ref{BasicEqs1}), (\ref{cond}), (\ref{flux}) is given implicitly by the following formula:
\begin{equation}\label{solut}\frac{v^{2}}{2r^{4}}+I^{-2}f(v)-C_{0}=0,\end{equation}
where $C_{0}$ is a constant defined by means of (\ref{cond}), and
\begin{equation*}f(v)=\int vp^{\prime}(v)dv.\end{equation*}
\end{theorem}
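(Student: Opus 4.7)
The plan is to reduce the Euler system, under the radial ansatz, to a single first-order ODE for $v(r)$ that integrates to the claimed conservation law. Since the isentropy condition $(\mathbf{u},\nabla\sigma)=0$ is automatic for a radial flow and has already been used via (\ref{eqforT}) to express $T$ and hence $p$ as functions of $v$ alone, and since the continuity equation $\mathrm{div}(v^{-1}\mathbf{u})=0$ is built into the ansatz (\ref{Uv}), the only remaining content of (\ref{BasicEqs1}) is the Euler equation.

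First I would write $\mathbf{u}=U(r)\mathbf{r}=W(r)\mathbf{n}$ with $W(r)=rU(r)=Iv(r)/r^{2}$, using (\ref{Uv}). For such a purely radial field the convective derivative $(\mathbf{u},\nabla)\mathbf{u}$ is itself radial and its magnitude is $W\,dW/dr$, while $\nabla p=p'(v)\,v'(r)\,\mathbf{n}$. Projecting the Euler equation onto $\mathbf{n}$ thus yields
\begin{equation*}
\frac{1}{v}\,W\,\frac{dW}{dr}=-p'(v)\,v'(r).
\end{equation*}
Multiplying by $v$ and observing that the right-hand side equals $-\,vp'(v)\,v'(r)=-\,df(v)/dr$ by the definition of $f$, while the left-hand side is $(1/2)\,d(W^{2})/dr$ once we note $W\,dW/dr=\tfrac12\,d(W^{2})/dr$ (and indeed multiplying by $v$ absorbs the factor through the chain rule applied to $W^{2}=I^{2}v^{2}/r^{4}$, which I would verify directly so as to handle the $1/v$ factor cleanly).

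More carefully, I would rewrite the ODE as a Bernoulli-type exact relation: substituting $W=Iv/r^{2}$ gives $W^{2}/2=I^{2}v^{2}/(2r^{4})$, and
\begin{equation*}
\frac{d}{dr}\!\left(\frac{I^{2}v^{2}}{2r^{4}}\right)+v\,p'(v)\,\frac{dv}{dr}=0,
\end{equation*}
which I would obtain either by direct differentiation along the streamline or, equivalently, by noting that for isentropic flow the specific enthalpy $h(v)$ satisfies $dh=v\,dp=v\,p'(v)\,dv$, so $f$ is (up to a constant) the specific enthalpy and the relation above is the classical Bernoulli integral. Integration in $r$ then gives $I^{2}v^{2}/(2r^{4})+f(v)=\mathrm{const}$, and dividing by $I^{2}$ yields (\ref{solut}). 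The constant $C_{0}$ is fixed by the boundary condition (\ref{cond}): as $r\to\infty$, $v\to v_{0}$ and the first term vanishes, so $C_{0}=I^{-2}f(v_{0})$.

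The only delicate step is the careful reduction of $(\mathbf{u},\nabla)\mathbf{u}$ for the radial ansatz and the algebraic manipulation that turns the radial momentum balance into an exact differential; once this is in place, the rest is a one-line integration. The implicit (rather than explicit) form of the solution is unavoidable because $f$ depends on the equation of state $p(v)$, and the resulting algebraic relation between $v$ and $r$ is generally transcendental, which also explains the multivaluedness alluded to in the introduction.
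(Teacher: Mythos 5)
Your proposal is correct and follows essentially the same route as the paper: reduce the Euler equation under the radial ansatz $\mathbf{u}=Iv(r)r^{-2}\mathbf{n}$ to the single ODE $\frac{d}{dr}\bigl(\frac{v^{2}}{2r^{4}}\bigr)+I^{-2}vp'(v)\frac{dv}{dr}=0$ and integrate, fixing $C_{0}$ from the condition at infinity. The Bernoulli/enthalpy interpretation of $f$ is a nice addition but not a different argument.
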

\begin{proof}
Since we are looking for solutions depending on the distance from the source $r$, $\nabla=\mathbf{n}\partial_{r}$, and taking into account (\ref{Uv}), we get the following equation for $v(r)$:
\begin{equation*}-\frac{2v}{r^{5}}+\frac{v^{\prime}}{r^{4}}+\frac{p^{\prime}}{I^{2}}=0.\end{equation*}
Substituting $p=p(v(r))$, we get
\begin{equation*}\frac{d}{dr}\left(\frac{v^{2}}{2r^{4}}\right)+I^{-2}vp^{\prime}(v)\frac{dv}{dr}=0,\end{equation*}
from what follows (\ref{solut}).
\end{proof}

The above theorem provides a method of finding solutions of Euler equations for any equation of state.

\section{Examples}
In this section, we illustrate how above methods are applied to different models of gases. First of all, we elaborate ideal gases case and then discuss real ones.
\subsection{Ideal gases}
The Legendrian manifold $L$ for ideal gases is given by:
\begin{equation*}p=\frac{RT}{v},\quad e=\frac{nRT}{2},\quad \sigma=R\ln\left(T^{n/2}v\right),\end{equation*}
where $n$ is the degree of freedom.

The given level of the specific entropy $\sigma=R\sigma_{0}$ allows to express the pressure $p$ and the temperature $T$ as functions of the specific volume $v$:
\begin{equation*}p(v)=Rcv^{-(1+2/n)},\quad T(v)=cv^{-2/n},\end{equation*}
where $c=\exp(2\sigma_{0}/n)$.

Then (\ref{solut}) takes the following form:
\begin{equation*}\frac{v^{2}}{2r^{4}}+\frac{Rc(n+2)}{2I^{2}}v^{-2/n}-C_{0}=0,\end{equation*}
or in terms of density $\rho=v^{-1}$
\begin{equation}\label{solideal}\frac{1}{2r^{4}\rho^{2}}+\frac{Rc(n+2)}{2I^{2}}\rho^{2/n}-C_{0}=0.\end{equation}
\begin{theorem}
Solution $\rho(r)$ defined by (\ref{solideal}) exists if
\begin{equation*}r>\left(2\rho_{*}^{2}\left(C_{0}-RI^{-2}c(n/2+1)\rho_{*}^{2/n}\right)\right)^{-1/4},\end{equation*}
where
\begin{equation*}\rho_{*}=\left(\frac{2I^{2}nC_{0}}{Rc(n+1)(n+2)}\right)^{n/2}.\end{equation*}
\end{theorem}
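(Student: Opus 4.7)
The plan is to invert the implicit relation (\ref{solideal}) with respect to $r$ and read off the range of admissible radii. Writing (\ref{solideal}) as $r^{4}=1/h(\rho)$ with
\[
h(\rho)=2C_{0}\rho^{2}-\frac{Rc(n+2)}{I^{2}}\rho^{2+2/n},
\]
the existence question reduces to determining for which $r>0$ the level set $\{h(\rho)=r^{-4}\}$ is nonempty. So my first step is to isolate $r^4$ as a function of $\rho$ and recognise that the problem is purely a one-dimensional extremum analysis of $h$.

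Next I would pin down the interval on which $h(\rho)>0$ (where $r^{4}$ can be positive): $h$ vanishes at $\rho=0$ and at an upper endpoint $\rho_{\max}$ defined by $\rho_{\max}^{2/n}=2I^{2}C_{0}/(Rc(n+2))$, and is strictly positive in between. Consequently $r(\rho)=h(\rho)^{-1/4}\to +\infty$ at both endpoints, so $r$ attains a finite minimum at some interior critical point, and every $r$ strictly larger than that minimum is realised by (exactly two) values of $\rho$ in $(0,\rho_{\max})$.

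Then I would locate the critical point by solving $h'(\rho)=0$. A direct differentiation gives
\[
h'(\rho)=2\rho\Bigl(2C_{0}-\tfrac{Rc(n+1)(n+2)}{nI^{2}}\rho^{2/n}\Bigr),
\]
whose nontrivial root is precisely $\rho_{*}=\bigl(2nI^{2}C_{0}/(Rc(n+1)(n+2))\bigr)^{n/2}$, matching the statement. A quick sign check of $h''$ (or the boundary behaviour) confirms that $\rho_{*}$ is a global maximum of $h$ on $(0,\rho_{\max})$. Substituting this back into $h$ and using $r_{\min}=h(\rho_{*})^{-1/4}$ yields the advertised lower bound
\[
r>\Bigl(2\rho_{*}^{2}\bigl(C_{0}-RI^{-2}c(n/2+1)\rho_{*}^{2/n}\bigr)\Bigr)^{-1/4}.
\]

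I do not anticipate a genuine obstacle here. The only mildly tedious step is the algebraic simplification $\tfrac{Rc(n+2)}{2I^{2}}\rho_{*}^{2/n}=\tfrac{n}{n+1}C_{0}$, which however is immediate from the critical-point equation and shows that the bracketed factor in the stated bound collapses to $C_{0}/(n+1)$, making the monotonicity/concavity argument transparent.
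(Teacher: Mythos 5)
Your argument is correct, and in fact the paper states this theorem without any proof, so there is nothing to compare against: your extremum analysis --- writing (\ref{solideal}) as $r^{4}=1/h(\rho)$ with $h(\rho)=2C_{0}\rho^{2}-RcI^{-2}(n+2)\rho^{2+2/n}$, locating the interior maximum of $h$ at $\rho_{*}$, and reading off $r_{\min}=h(\rho_{*})^{-1/4}$ --- is evidently the intended route, and your simplification $h(\rho_{*})=2C_{0}\rho_{*}^{2}/(n+1)$ checks out. The only hypothesis worth making explicit is $C_{0}>0$, without which the interval $(0,\rho_{\max})$ on which $h>0$ is empty.
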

Let us explore the asymptotic behavior of $\rho(r)$ if the density at infinity is given $\rho|_{r\to\infty}=\rho_{0}$.
\begin{theorem}
If $\rho_{0}=0$, then the asymptotic behavior of $\rho(r)$ at infinity is of the form:
\begin{equation*}
\rho(r)=\frac{1}{\sqrt{2C_{0}}r^{2}}+o\left(\frac{1}{r^{2}}\right),
\end{equation*}
if $\rho_{0}\ne0$, then
\begin{equation*}\rho(r)=\left(\frac{2C_{0}I^{2}}{Rc(n+2)}\right)^{n/2}+\sum\limits_{i=1}^{\infty}\frac{\beta_{i}}{r^{4i}}.\end{equation*}
\end{theorem}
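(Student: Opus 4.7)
The plan is to exploit the fact that the implicit relation (\ref{solideal}) depends on $r$ only through the combination $s=r^{-4}$. Introduce
\[
F(\rho,s)=\frac{s}{2\rho^{2}}+\frac{Rc(n+2)}{2I^{2}}\rho^{2/n}-C_{0},
\]
so the problem reduces to analysing $F(\rho,s)=0$ near $s=0$.

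For the case $\rho_{0}\neq 0$, first pass to the limit $r\to\infty$ (i.e.\ $s\to 0$) in $F(\rho,s)=0$. This yields $\tfrac{Rc(n+2)}{2I^{2}}\rho_{0}^{2/n}=C_{0}$, which is exactly the stated value of $\rho_{0}$. Next I would verify that $F$ is smooth in a neighbourhood of $(\rho_{0},0)$ and compute
\[
\partial_{\rho}F(\rho_{0},0)=\frac{Rc(n+2)}{nI^{2}}\,\rho_{0}^{2/n-1}\neq 0.
\]
The implicit function theorem then gives a smooth (in fact analytic, since $F$ is analytic in $\rho$ and polynomial in $s$) branch $\rho=\rho(s)$ with $\rho(0)=\rho_{0}$, hence a convergent power series $\rho(s)=\rho_{0}+\sum_{i\ge 1}\beta_{i}s^{i}$. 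Substituting $s=r^{-4}$ gives the claimed expansion. The coefficients $\beta_{i}$ could in principle be found recursively by plugging the ansatz into $F(\rho,s)=0$ and matching powers of $s$, but only the form of the expansion is asserted.

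For the degenerate case $\rho_{0}=0$, the implicit function theorem no longer applies and the leading behaviour must be guessed by a scaling argument. Look for $\rho\sim A r^{-\alpha}$; the first term in (\ref{solideal}) scales as $r^{-4+2\alpha}$ and the second as $r^{-2\alpha/n}$. Balancing the first term against the constant $C_{0}$ forces $\alpha=2$, and with this choice the second term is $O(r^{-4/n})\to 0$ as $r\to\infty$, so it is genuinely subleading. The balance $\tfrac{1}{2A^{2}}=C_{0}$ gives $A=(2C_{0})^{-1/2}$. To turn this into a rigorous asymptotic, write $\rho(r)=(2C_{0})^{-1/2}r^{-2}(1+\varepsilon(r))$, substitute into (\ref{solideal}), and check that $F(\rho,r^{-4})=0$ combined with the monotonicity of $F$ in $\rho$ (which follows from $\partial_{\rho}F>0$ for $\rho>0$, once the $s/(2\rho^{2})$ term is compared with the algebraic one) forces $\varepsilon(r)\to 0$.

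The main obstacle I expect is the $\rho_{0}=0$ case: one must argue that the branch selected by the boundary condition $\rho\to 0$ really does behave like $A/r^{2}$ and not, say, like some other power which also sends $\rho$ to zero. This is controlled by the above scaling analysis together with the sign of $\partial_{\rho}F$, which selects a unique admissible branch. The $\rho_{0}\neq 0$ case is essentially a direct application of the implicit function theorem and should present no difficulty beyond bookkeeping.
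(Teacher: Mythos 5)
The paper states this theorem without proof, so there is no official argument to compare against; judged on its own, your proposal is correct in substance and is the natural route: the implicit function theorem at $(\rho_{0},0)$ in the variable $s=r^{-4}$ for the $\rho_{0}\neq0$ case, and a dominant-balance argument for $\rho_{0}=0$. One claim you make is false, though it does not damage the conclusion: $F$ is \emph{not} monotone in $\rho$ for fixed $s>0$, since $\partial_{\rho}F=-s\rho^{-3}+\tfrac{Rc(n+2)}{nI^{2}}\rho^{2/n-1}$ changes sign at exactly the critical density $\rho_{*}$ appearing in the paper's existence theorem --- indeed the solution is explicitly multivalued, which is why the paper distinguishes two branches. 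Fortunately you do not need monotonicity to select the branch: the condition $\rho(r)\to0$ is a hypothesis, and once it is assumed the equation itself gives
\[
\frac{1}{2r^{4}\rho^{2}}=C_{0}-\frac{Rc(n+2)}{2I^{2}}\rho^{2/n}\longrightarrow C_{0},
\]
hence $r^{2}\rho\to(2C_{0})^{-1/2}$ directly, with no scaling ansatz or sign analysis required. I would replace the monotonicity remark with this one-line computation; everything else (in particular the analyticity of $F$ in $\rho$ near $\rho_{0}>0$ and linearity in $s$, which yields the convergent series in $r^{-4}$) is fine.
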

Thus, constant $C_{0}$ is responsible for the density at infinity and we can see that there are two types of solutions defined by the condition at infinity. The graph for the density is represented in figure \ref{picDens}.

\begin{figure}[h!]
\centering
\includegraphics[scale=.25]{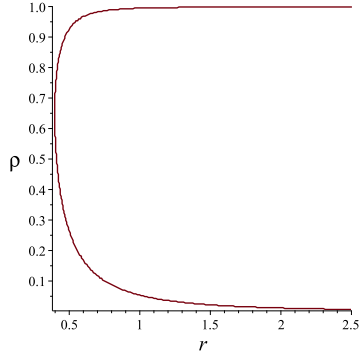}
\caption{Graph of function $\rho(r)$ for ideal gases}
\label{picDens}       
\end{figure}

\subsection{van der Waals gases}
The most popular model in description of real gases is the van der Waals model. In this case the Legendrian manifold $L$ is given by state equations
\begin{equation*}
p=\frac{8T}{3v-1}-\frac{3}{v^{2}},\quad e=\frac{4nT}{3}-\frac{3}{v},\quad \sigma=\ln\left(T^{4n/3}(3v-1)^{8/3}\right).
\end{equation*}
Here we use reduced coordinates $(e,v,T,p,\sigma)$, i.e. the point $(1,1,1,1,1)$ is the critical point.

The pressure $p$ and the temperature $T$ are expressed in terms of the specific volume $v$ in the following way:
\begin{equation*}T(v)=c(3v-1)^{-2/n},\quad p(v)=\frac{8c}{(3v-1)^{1+2/n}}-\frac{3}{v^{2}},\end{equation*}
where $c=\exp(3\sigma_{0}/4n)$.

The solution is of the form:
\begin{equation}\label{solvdW}\frac{1}{2r^{4}\rho^{2}}+I^{-2}\left(4c(3\rho^{-1}-1)^{-(1+2/n)}\left(\rho^{-1}(n+2)-\frac{n}{3}\right)-6\rho\right)-C_{0}=0.\end{equation}
Using (\ref{solvdW}) it is easy to check that the following theorem is valid:
\begin{theorem}
If $\rho_{0}=0$, then the asymptotic behavior of $\rho(r)$ at infinity is of the form:
\begin{equation*}
\rho(r)=\frac{1}{\sqrt{2C_{0}}r^{2}}+o\left(\frac{1}{r^{2}}\right),
\end{equation*}
if $\rho_{0}\ne0$, then
\begin{equation*}\rho(r)=\rho_{0}+\sum\limits_{i=1}^{\infty}\frac{\beta_{i}}{r^{4i}},\end{equation*}
where $\rho_{0}$ and $C_{0}$ are connected by the relation
\begin{equation*}\frac{3C_{0}\rho_{0}I^{2}}{2}=2c(3\rho_{0}^{-1}-1)^{-(1+2/n)}(3n+6-n\rho_{0})-9\rho_{0}^{2}.\end{equation*}
\end{theorem}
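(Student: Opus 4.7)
The plan is to read (\ref{solvdW}) as an implicit relation between $\rho$ and $r$ with small parameter $t:=r^{-4}$ and extract the asymptotics by matching orders in $t$. Setting
\[
F(\rho):=I^{-2}\left(4c(3\rho^{-1}-1)^{-(1+2/n)}\left(\rho^{-1}(n+2)-\frac{n}{3}\right)-6\rho\right),
\]
equation (\ref{solvdW}) becomes $\frac{t}{2\rho^{2}}+F(\rho)-C_{0}=0$.

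For the case $\rho_{0}=0$, I would first analyze $F(\rho)$ as $\rho\to 0^{+}$. Writing $(3\rho^{-1}-1)^{-(1+2/n)}=(\rho/3)^{1+2/n}(1-\rho/3)^{-(1+2/n)}$ and expanding, one obtains $F(\rho)=O(\rho^{2/n})+O(\rho)\to 0$. The leading-order balance in (\ref{solvdW}) then forces $\frac{1}{2r^{4}\rho^{2}}=C_{0}+o(1)$, which inverts to $\rho(r)=(2C_{0})^{-1/2}r^{-2}+o(r^{-2})$, giving the first formula.

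For $\rho_{0}\neq 0$, passing to the limit $r\to\infty$ in (\ref{solvdW}) gives $F(\rho_{0})=C_{0}$. Multiplying by $\tfrac{3\rho_{0}I^{2}}{2}$ and rewriting the bracket as $\rho_{0}^{-1}(n+2)-\tfrac{n}{3}=\tfrac{1}{3\rho_{0}}(3n+6-n\rho_{0})$ reproduces exactly the stated relation between $C_{0}$ and $\rho_{0}$. Next, provided $F'(\rho_{0})\neq 0$, the analytic implicit function theorem applied to $\frac{t}{2\rho^{2}}+F(\rho)-C_{0}=0$ at $(t,\rho)=(0,\rho_{0})$ expresses $\rho$ as an analytic function of $t$ near $0$. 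Since the equation depends on $r$ only through $t=r^{-4}$, the resulting Taylor series automatically takes the claimed form $\rho(r)=\rho_{0}+\sum_{i\geq 1}\beta_{i}r^{-4i}$, and the coefficients $\beta_{i}$ are determined recursively by collecting powers of $t$; for example $\beta_{1}$ satisfies $F'(\rho_{0})\beta_{1}+\frac{1}{2\rho_{0}^{2}}=0$.

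The main technical point is to verify the non-degeneracy $F'(\rho_{0})\neq 0$. Up to positive factors, $F'(\rho_{0})$ equals $v_{0}p'(v_{0})$ evaluated along the isentrope $\sigma=R\sigma_{0}$, so it vanishes exactly at the isentropic sonic points. In the domain of applicable thermodynamic states cut out by the inequalities (\ref{applic}) such points are excluded, so the implicit function theorem applies and both asymptotic expansions follow.
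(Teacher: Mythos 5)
Your proof is correct and follows exactly the route the paper itself gestures at (the theorem is stated there with only the remark that it is ``easy to check'' from (\ref{solvdW}), no proof being given): treat $t=r^{-4}$ as the small parameter, pass to the limit $t\to0$ to get $F(\rho_{0})=C_{0}$, which after multiplying by $3\rho_{0}I^{2}/2$ is precisely the stated relation, and invoke the analytic implicit function theorem for the power series in $r^{-4}$, with the $\rho_{0}=0$ branch handled by the dominant balance $\frac{1}{2r^{4}\rho^{2}}\to C_{0}$. The only quibble is terminological: $F'(\rho_{0})=0$ amounts to $p'(v_{0})=0$ on the isentrope, i.e.\ vanishing adiabatic sound speed at infinity rather than a sonic point of the flow (the flow speed is zero there), but your conclusion that this degeneracy is excluded on the applicable domain (\ref{applic}) is sound, since there $(\partial p/\partial v)_{\sigma}=\gamma\,(\partial p/\partial v)_{T}<0$.
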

The distribution of density for van der Waals gases has the same form as for ideal gases and is shown in figure \ref{picDensvdW}.

\begin{figure}[h!]
\centering
\includegraphics[scale=.25]{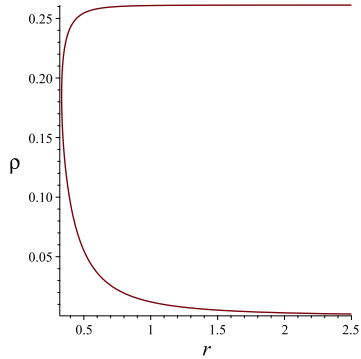}
\caption{Graph of function $\rho(r)$ for van der Waals gases}
\label{picDensvdW}       
\end{figure}
The domain where solution exists can be derived by means of (\ref{solvdW}) in the same way as it was done for ideal gases.
\subsubsection{Phase transitions}
If the temperature of van der Waals gases is under the critical value, i.e. $T<1$, phase transitions may occur along the gas flow. Having a solution given by (\ref{solvdW}) it is possible to find the domains in space corresponding to different phases. We construct such distributions of phases for different levels of the specific entropy. From now and on, $s_{0}=\exp(\sigma_{0})$.

Suppose that the level is ``small'', $s_{0}=0.5$. The solution is multivalued and we start with its lower branch.

\begin{figure}[ht!]
\centering \subfigure[]{
\includegraphics[width=0.35\linewidth]{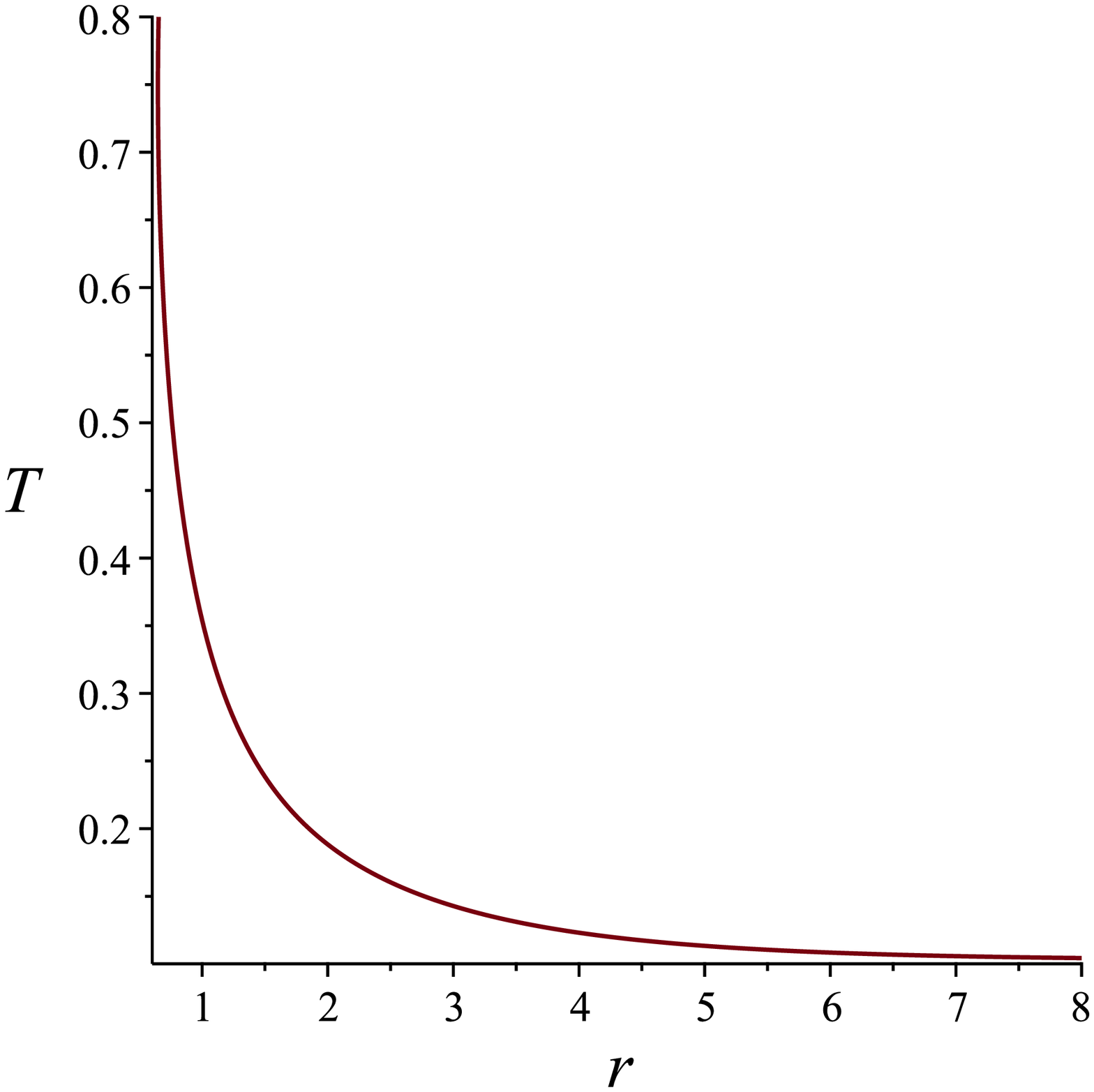} \label{templow1} }
\hspace{4ex}
\subfigure[]{ \includegraphics[width=0.35\linewidth]{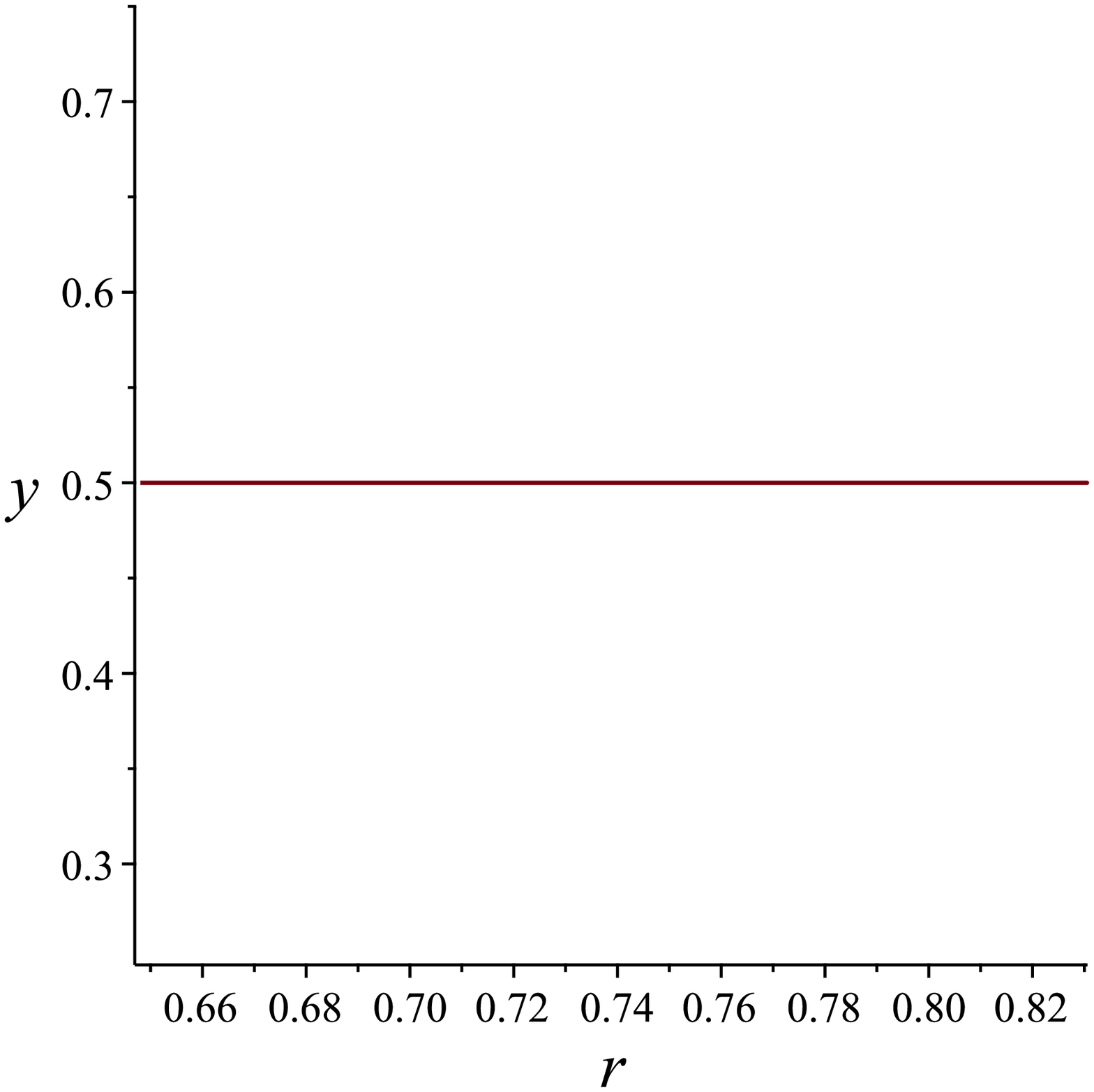} \label{phaselow1} }
\caption{\footnotesize{Dependence of temperature on the distance from the source and the distribution of phases in space for van der Waals gases: \subref{templow1} represents lower branch of solution; \subref{phaselow1} represents the corresponding phases. Variable $y=0.5$ means that the medium is in an intermediate state (condensation process).}} \label{lowercase1}
\end{figure}

Figure \ref{lowercase1} shows that if the temperature decreases along the gas flow, the gas is in an unstable state corresponding to condensation process.

\begin{figure}[ht!]
\centering \subfigure[]{
\includegraphics[width=0.35\linewidth]{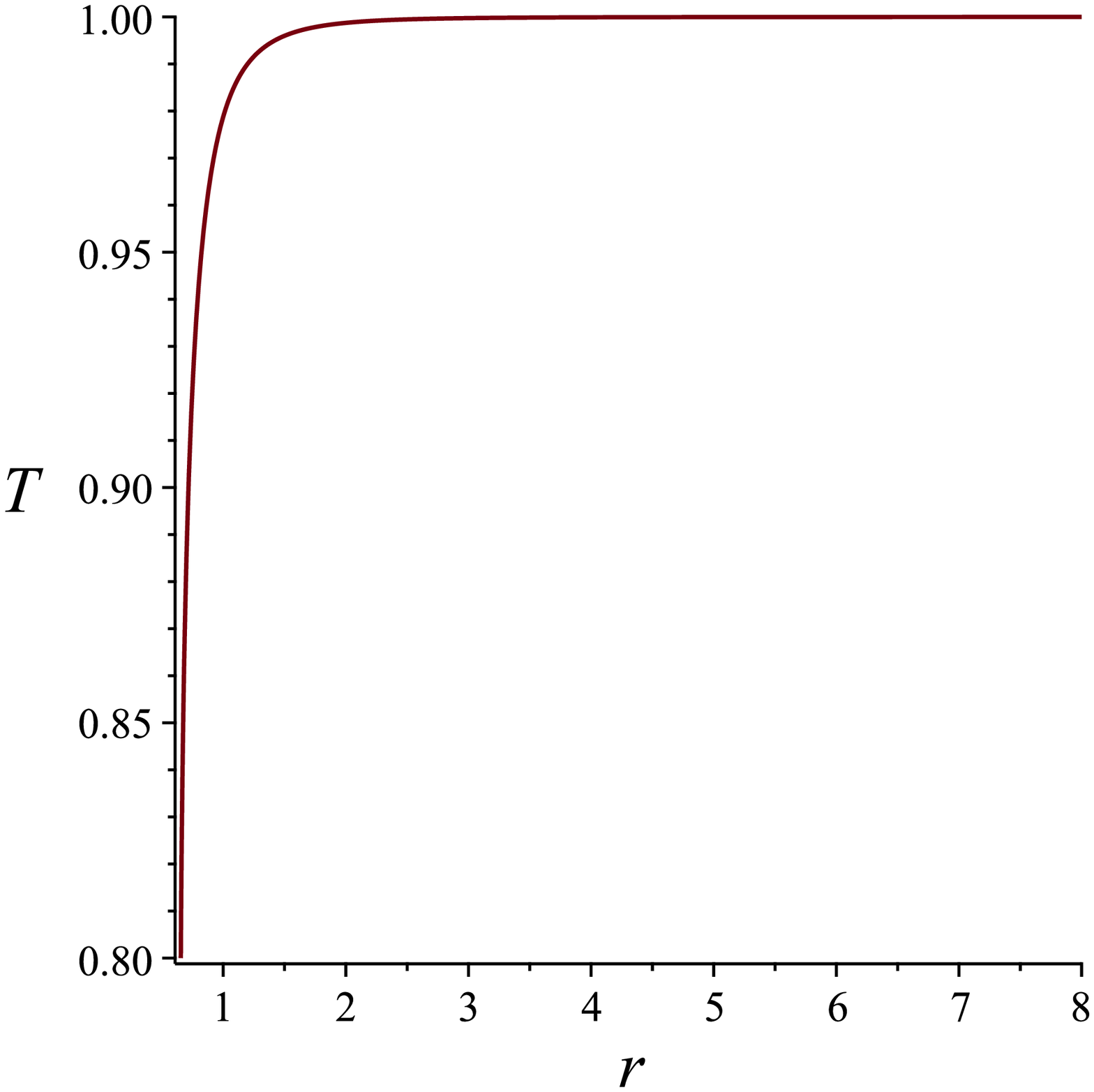} \label{temphigh1} }
\hspace{4ex}
\subfigure[]{ \includegraphics[width=0.35\linewidth]{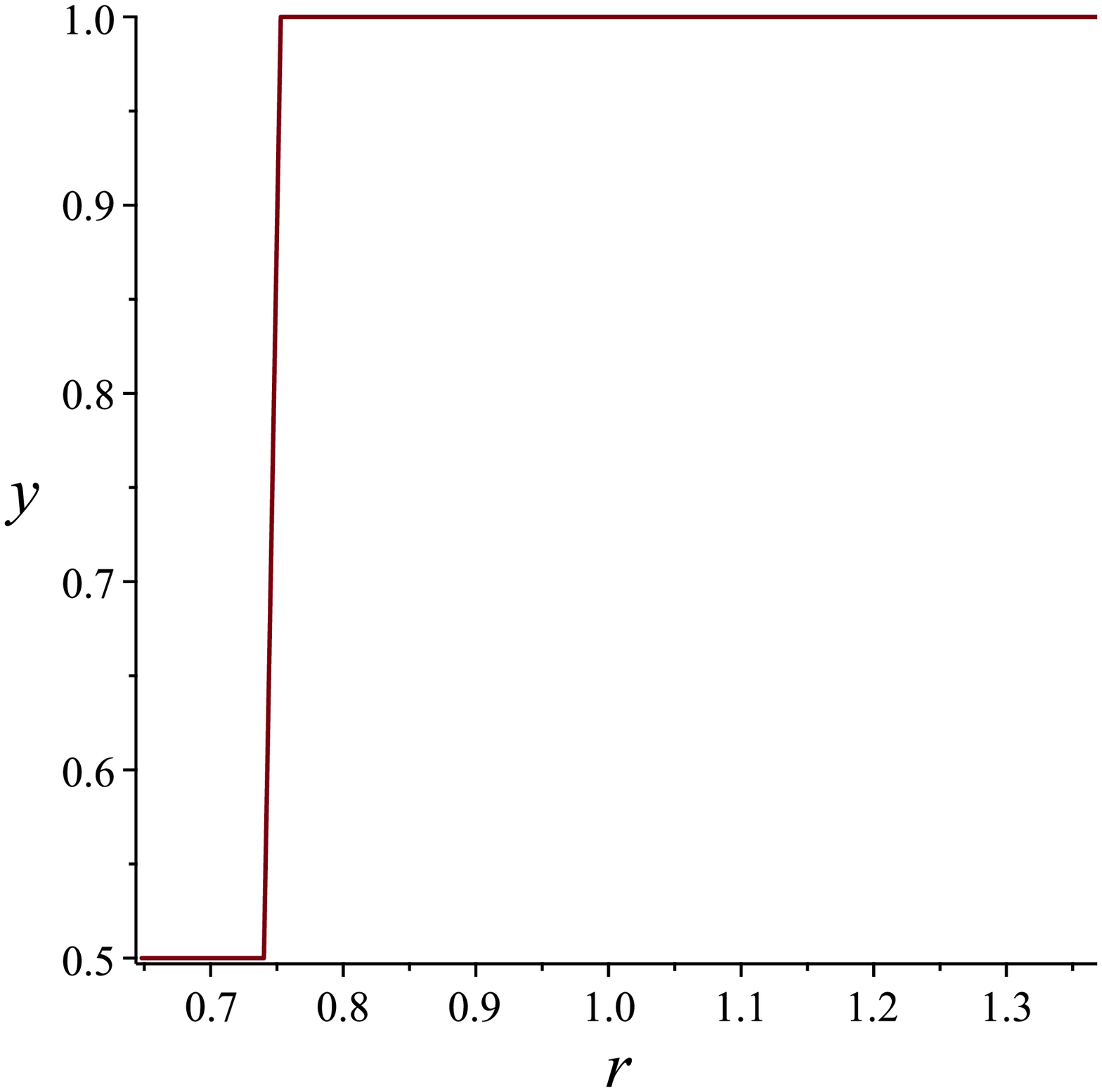} \label{phasehigh1} }
\caption{\footnotesize{Dependence of temperature on the distance from the source and the distribution of phases in space for van der Waals gases: \subref{temphigh1} represents higher branch of solution; \subref{phasehigh1} represents the corresponding phases. Variable $y=0.5$ means that the medium is in an intermediate state (condensation process), $y=1$ stands for the liquid phase}} \label{highercase1}
\end{figure}

Figure \ref{highercase1} shows that if the temperature increases while the distance increases, the condensation process is concentrated near the source and at large distance from the source the medium is in a liquid phase.

If the temperature $T$ at infinity is greater then critical value, i.e. $T_{0}>1$, the distribution of phases may be as in figure \ref{ph}.

\begin{figure}[h!]
\centering
\includegraphics[scale=.35]{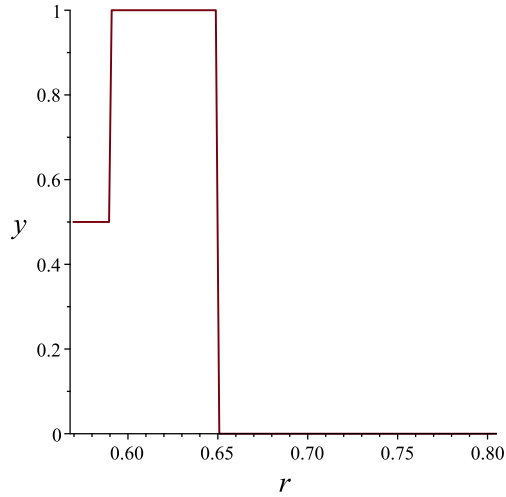}
\caption{\footnotesize{The distribution of phases in case $T_{0}=1.1$. If $y=1$, the medium is in a liquid phase, if $y=0.5$, the medium is in an intermediate state (condensation process), if $y=0$, the medium is in a gas phase}}
\label{ph}       
\end{figure}

Suppose now that the level is ``big'', $s_{0}=200$.

\begin{figure}[ht!]
\centering \subfigure[]{
\includegraphics[width=0.35\linewidth]{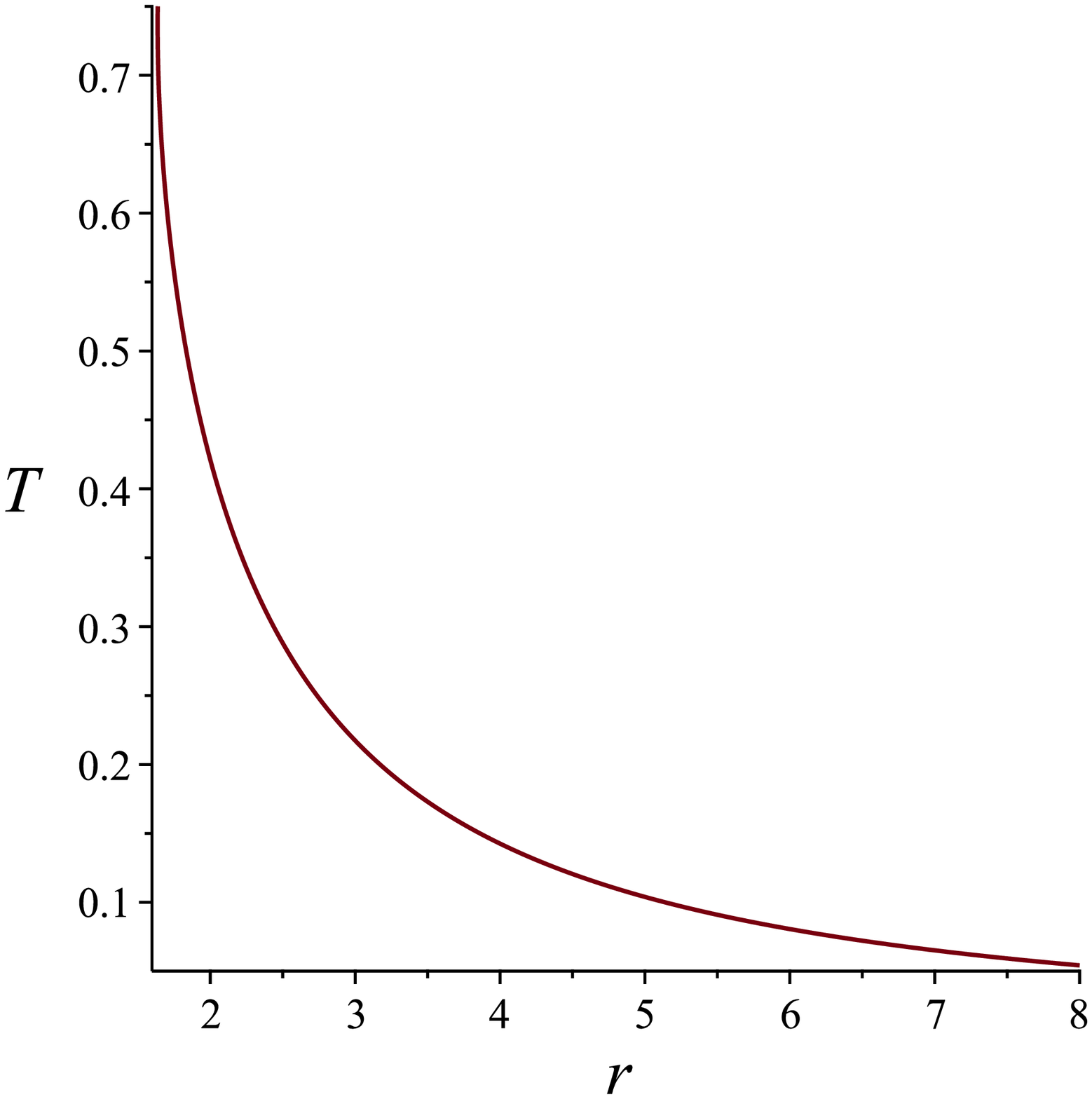} \label{templow2} }
\hspace{4ex}
\subfigure[]{ \includegraphics[width=0.35\linewidth]{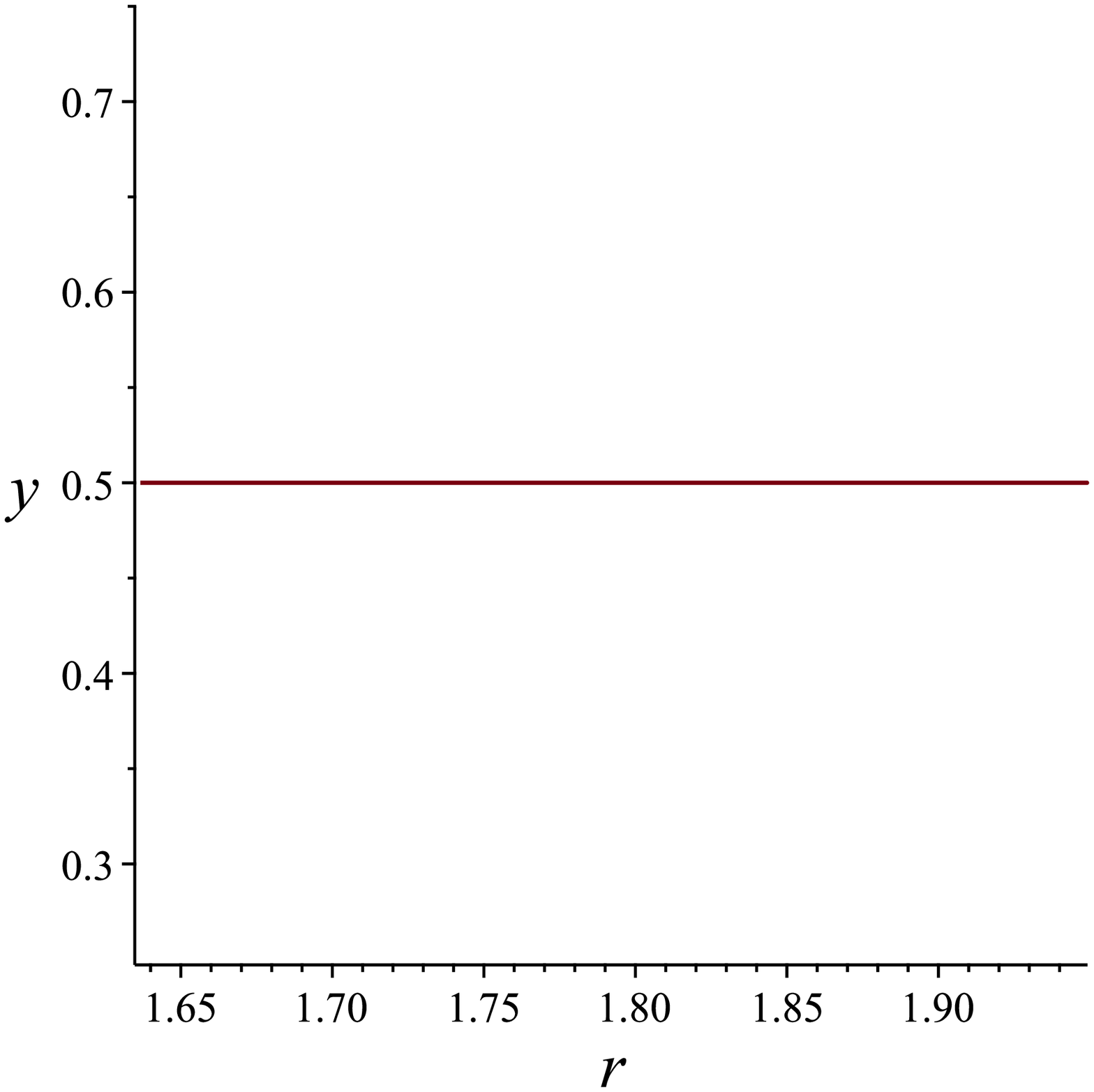} \label{phaselow2} }
\caption{\footnotesize{Dependence of temperature on the distance from the source and the distribution of phases in space for van der Waals gases: \subref{templow2} represents lower branch of solution; \subref{phaselow2} represents the corresponding phases. Variable $y=0.5$ means that the medium is in an intermediate state (condensation process)}} \label{lowercase2}
\end{figure}

Figure \ref{lowercase2} shows that if the temperature decreases along the gas flow, the gas is in an unstable state corresponding to condensation process and the picture is exactly the same as in the previous case.

\begin{figure}[ht!]
\centering \subfigure[]{
\includegraphics[width=0.35\linewidth]{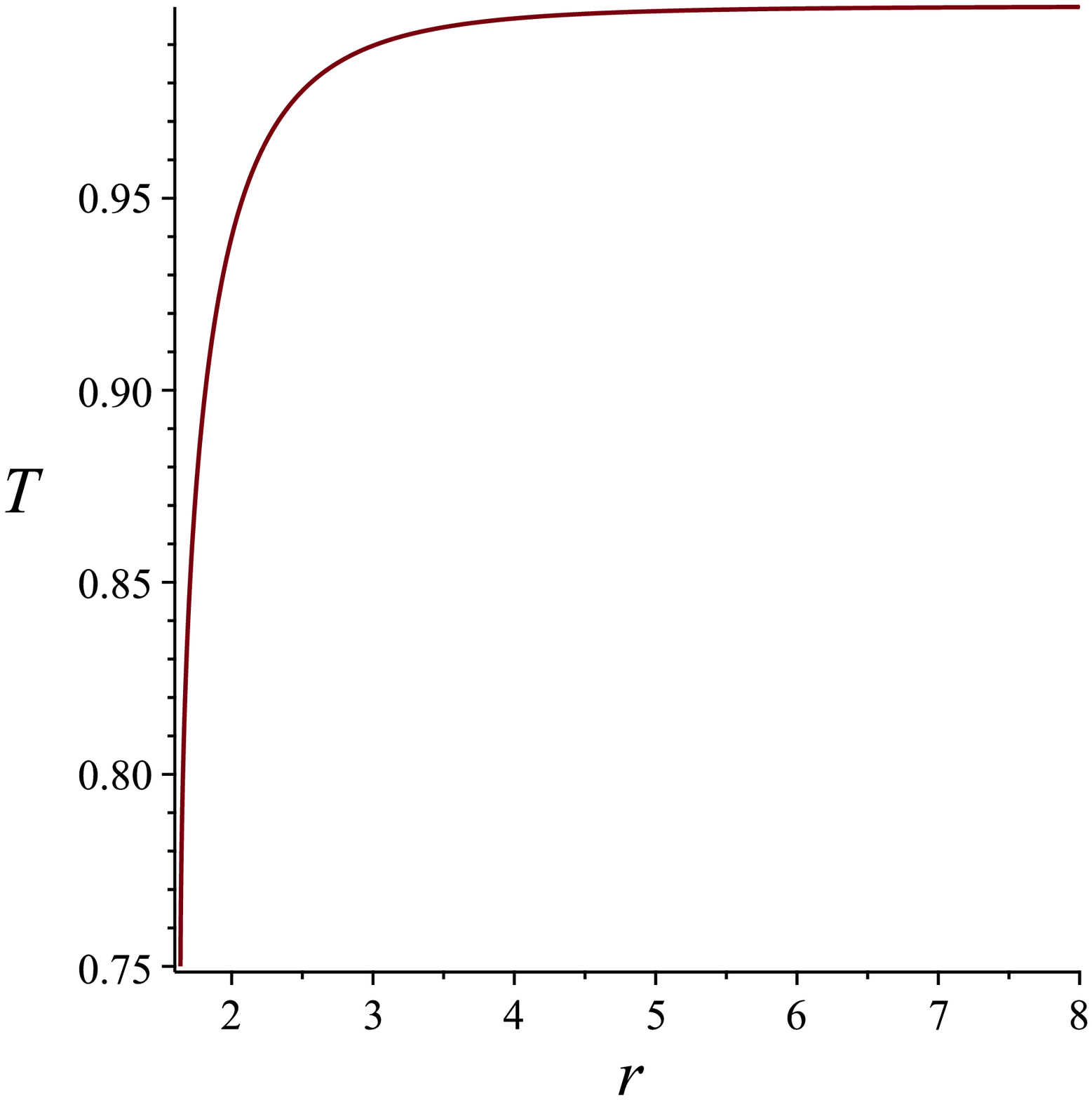} \label{temphigh2} }
\hspace{4ex}
\subfigure[]{ \includegraphics[width=0.35\linewidth]{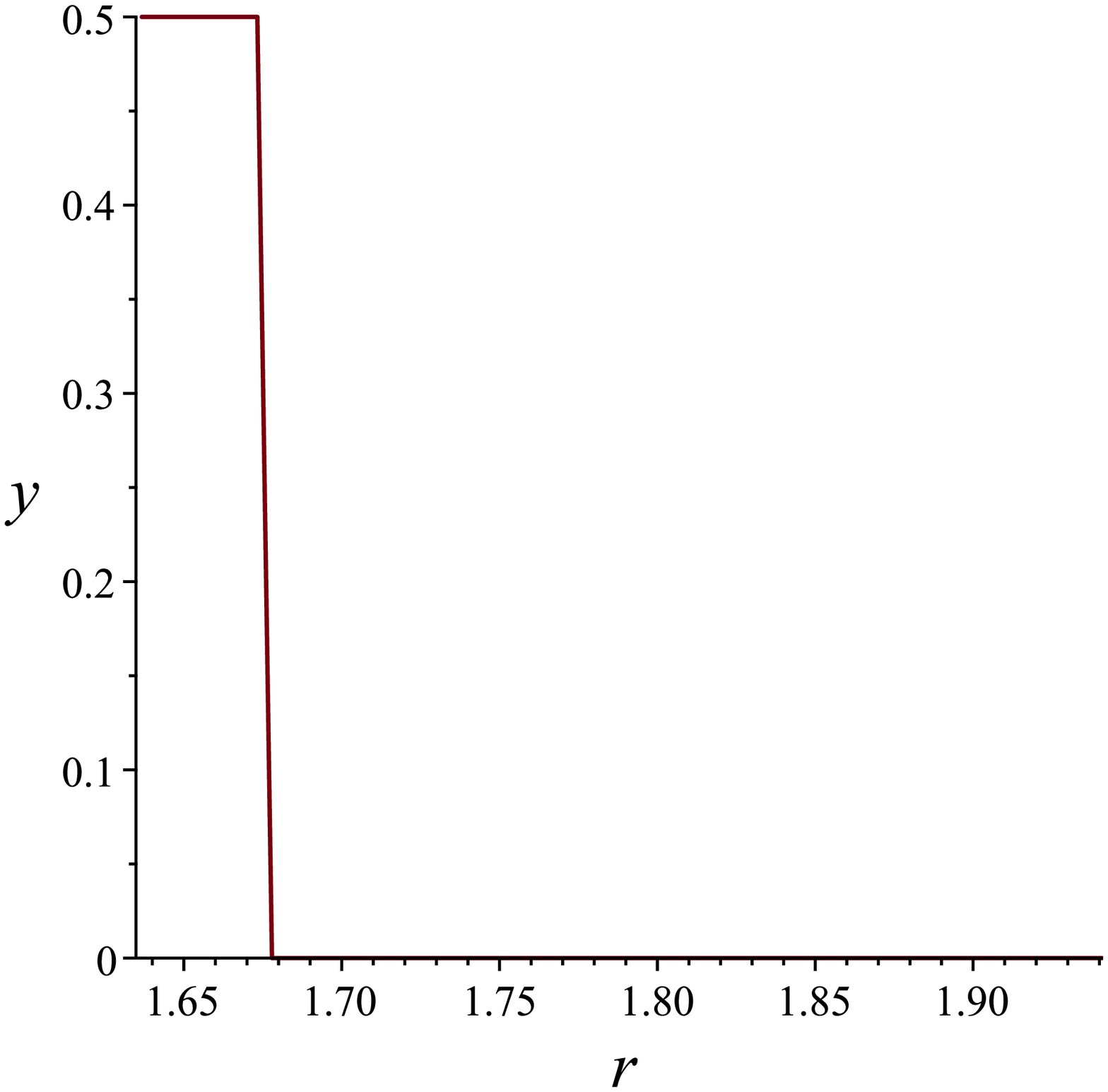} \label{phasehigh2} }
\caption{\footnotesize{Dependence of temperature on the distance from the source and the distribution of phases in space for van der Waals gases: \subref{temphigh2} represents higher branch of solution; \subref{phasehigh2} represents the corresponding phases. Variable $y=0.5$ means that the medium is in an intermediate state (condensation process), $y=0$ stands for the gas phase}} \label{highercase2}
\end{figure}

Figure \ref{highercase2} shows that the condensation process is concentrated near the source and at large distance from the source the medium is in a gas phase. The picture is very similar to filtration processes \cite{LRNon}.


\section{Solutions of the Navier-Stokes equation}
\subsection{Singularly perturbed problem}
Let us now take into account the viscosity of the medium. In this case the equation for the specific volume $v(r)$ contains viscosity coefficients $\eta$ and $\zeta$ and can be written in the following form:
\begin{equation}\label{NS}-\frac{v}{r^{3}}(rv^{\prime\prime}-2v^{\prime})\mu+\frac{d}{dr}\left(\frac{v^{2}}{2r^{4}}+I^{-2}f(v)\right)=0,\end{equation}
where
\begin{equation*}\mu=I^{-1}\left(\zeta+\frac{4\eta}{3}\right)\end{equation*}
is considered to be a small parameter.

Note that in (\ref{NS}) the small parameter $\mu$ is multiplied by the higher derivative. Such problems are called \textit{singularly perturbed}. Their main feature is that their solutions do not converge to the solution of an unperturbed problem obtained by putting $\mu=0$. In \cite{VasButNef} it is shown that step-like solutions of such problem are typical, and methods of estimation of the location of the step are developed as well as the structure of asymptotic series for solution is suggested.

Numerical computations show that solution of (\ref{NS}) has a step-like form, and the bigger the viscosity $\mu$, the smoother the step. It is shown in figure \ref{NSpicsolut}.

\begin{figure}[h!]
\centering
\includegraphics[scale=.25]{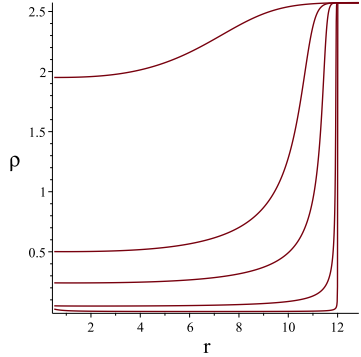}
\caption{The distribution of the density $\rho$ for ideal gas}
\label{NSpicsolut}       
\end{figure}

\subsection{Asymptotics for the Navier-Stokes equation}
Here, we consider the intensity of the source $I$ as either small or big parameter. First of all, let us rewrite equation (\ref{NS}) in the form:
\begin{equation*}\frac{v}{r^{3}}(rv^{\prime\prime}-2v^{\prime})=\frac{I}{k}\frac{d}{dr}\left(\frac{v^{2}}{2r^{4}}+I^{-2}f(v)\right),\end{equation*}
where $k=4\eta/3+\zeta$. Introduce new variables $(x,w)$ in the following way:
\begin{equation*}r=I^{\alpha}x,\quad v=I^{\beta}w,\end{equation*}
where $\alpha$ and $\beta$ are constants. We get
\begin{equation}\label{modNS}\frac{w}{x^{3}}(xw^{\prime\prime}-2w^{\prime})=k^{-1}\frac{d}{dx}\left(\frac{I^{1-\alpha}w^{2}}{2x^{4}}+I^{3\alpha-2\beta-1}f\left(I^{\beta}w\right)\right).\end{equation}
Require that function $f(v)$ satisfies the inequality $|f(v)|\le Cv^{A}$. For both ideal and van der Waals gas this condition holds putting $A=-2/n$. In this case a small parameter can be introduced into the right hand side of (\ref{modNS}) by choosing $(\alpha,\beta)$ in the following way:
\begin{itemize}
\item if $I\ll 1$, then
\begin{equation*}
\left\{
\begin{aligned}
&3\alpha-2\beta-1+A\beta>0,\\
&1-\alpha>0.
\end{aligned}
\right.
\end{equation*}
\item if $I\gg 1$, then
\begin{equation*}
\left\{
\begin{aligned}
&3\alpha-2\beta-1+A\beta<0,\\
&1-\alpha<0.
\end{aligned}
\right.
\end{equation*}
\end{itemize}

\subsubsection{Ideal gases}
For ideal gases
\begin{equation*}f(v)=Rc\left(\frac{n}{2}+1\right)v^{-2/n}.\end{equation*}
Taking $\beta=n(n+1)^{-1}(2\alpha-1)$ and $\alpha=1/2$ for $I\ll 1$, $\alpha=2$ for $I\gg1$ we get:
\begin{equation}\label{NSperturb}\frac{w}{x^{3}}(xw^{\prime\prime}-2w^{\prime})=\varepsilon_{1,2}\frac{d}{dx}\left(\frac{w^{2}}{2kx^{4}}+f(w)\right),\end{equation}
where $\varepsilon_{1}=\sqrt{I}$ for $I\ll 1$ and $\varepsilon_{2}=1/I$ for $I\gg 1$.

We are looking for the solution of (\ref{NSperturb}) in the following form:
\begin{equation*}w(x)=w_{0}(x)+\varepsilon_{1,2}w_{1}(x)+\ldots,\end{equation*}
and for the zeroth $w_{0}(x)$ and the first $w_{1}(x)$ order terms we have expressions:
\begin{equation*}\begin{split}&w_{0}(x)=C_{1}x^{3}+C_{2},{}\\&
w_{1}(x)=-\frac{Rcn}{6C_{1}k}(C_{1}x^{3}+C_{2})^{-2/n}+\frac{1}{6xk}(2x^{4}C_{3}k-3C_{1}x^{3}+6kxC_{4}-3C_{2}),\end{split}\end{equation*}
where $C_{j}$ $(j=\overline{1,4})$ are constants.

\subsubsection{van der Waals gases}
For van der Waals gases we have:
\begin{equation*}f(v)=\frac{8c}{3(3v-1)^{2/n+1}}+\frac{4c(n+2)}{3(3v-1)^{2/n}}-\frac{6}{v}.\end{equation*}
We restrict our consideration on the case of monatomic gases, i.e. put $n=3$. If $I\gg1$ one may take $\alpha=2$, $\beta=6$ and get the following result:
\begin{equation*}w(x)=w_{0}(x)+\varepsilon w_{1}(x)+\ldots,\end{equation*}
where $\varepsilon=1/I$,
\begin{equation*}w_{0}(x)=C_{1}x^{3}+C_{2},\quad w_{1}(x)=\frac{1}{6xk}(2x^{4}C_{3}k-3C_{1}x^{3}+6C_{4}kx-3C_{2}),\end{equation*}
and $C_{j}$ $(j=\overline{1,4})$ are constants.

The case of $I\ll1$ can be elaborated in the same way.
\subsection{Regular asymptotic expansions}
Here, we consider $I$ as a small parameter and look for the solution of equation (\ref{NS}) written as
\begin{equation*}\frac{Iv}{r^{3}}(rv^{\prime\prime}-2v^{\prime})=k^{-1}\frac{d}{dr}\left(\frac{I^{2}v^{2}}{2r^{4}}+f(v)\right)\end{equation*}
in the following form:
\begin{equation*}v(r)=v_{0}(r)+Iv_{1}(r)+I^{2}v_{2}(r)+\ldots\end{equation*}
Then, the zeroth order term can be found from equation $f(v_{0})=f_{0}$, from what follows that
\begin{equation}\label{zero}v_{0}=f^{-1}(f_{0}),\end{equation}
where $f_{0}$ is a constant.

Invertibility conditions for $f(v)$ in case of ideal and van der Waals gases will be elaborated below.

The equation for the first order term is
\begin{equation*}(f^{\prime}(v_{0})v_{1}^{\prime}+f^{\prime\prime}(v_{0})v_{0}^{\prime}v_{1})r^{5}-kv_{0}r(r^{2}v_{0}^{\prime})^{\prime}=0.\end{equation*}
Since $v_{0}(r)=v_{0}=\mathrm{const}$, we have constant solution for the first order term as well:
\begin{equation}\label{first}v_{1}(r)=v_{1}.\end{equation}
One may show that solutions for the second $v_{2}(r)$ and the third order terms $v_{3}(r)$ are
\begin{equation}\begin{split}\label{twothree}&v_{2}(r)=\frac{v_{0}^{2}}{2f^{\prime}(v_{0})r^{4}}+\alpha_{1},{}\\&v_{3}(r)=\frac{2v_{0}}{r^{4}(f^{\prime}(v_{0}))^{2}}\left(\frac{kv_{0}^{2}}{r^{3}}-\frac{v_{1}}{4}\left(v_{0}f^{\prime\prime}(v_{0})-2f^{\prime}(v_{0})\right)\right)+\alpha_{2},\end{split}\end{equation}
where $\alpha_{i}$ are constants.

Thus, for any thermodynamic state we have found an asymptotic expansion for the Navier-Stokes equation and due to (\ref{zero})-(\ref{twothree}) the corresponding solution for the density $\rho=v^{-1}$ and for the velocity field is regular at infinity. If $I=0$, the velocity field vanishes due to (\ref{Uv}), while the density is constant.
\subsubsection{Ideal gases}
As we have seen, the invertibility problem of $f(v)$ needs to be investigated. In case of ideal gases the following theorem is valid:
\begin{theorem}
The function $f(v)$ for ideal gases is invertible and the corresponding zeroth order term is uniquely determined:
\begin{equation*}v_{0}=\left(\frac{2f_{0}}{Rc(n+2)}\right)^{-n/2},\end{equation*}
where $c=\exp(2\sigma_{0}/n)$.
\end{theorem}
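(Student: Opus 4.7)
The plan is to exploit the fact that for ideal gases $f(v)$ is a pure power of $v$, so invertibility and the explicit formula both reduce to elementary algebra. First I would recall from the ideal-gas subsection the closed-form expression
\[ f(v)=Rc\!\left(\tfrac{n}{2}+1\right)v^{-2/n}, \]
obtained by substituting $p(v)=Rcv^{-(1+2/n)}$ into the integral $f(v)=\int vp'(v)\,dv$ introduced together with the implicit Euler solution~(\ref{solut}); no new calculation is required at this step.

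Next I would establish strict monotonicity on the physical domain $v>0$. Since $c=\exp(2\sigma_{0}/n)>0$ and the degree of freedom $n$ is a positive integer, differentiating yields $f'(v)=-\tfrac{Rc(n+2)}{n}v^{-(1+2/n)}<0$ throughout $(0,\infty)$. Thus $f\colon(0,\infty)\to(0,\infty)$ is a strictly decreasing bijection, which immediately gives invertibility and uniqueness of $v_{0}$ for every admissible value $f_{0}>0$ of the integration constant.

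Finally I would solve $f(v_{0})=f_{0}$ by raising to the appropriate power: the relation $Rc(n/2+1)v_{0}^{-2/n}=f_{0}$ is equivalent to $v_{0}^{-2/n}=2f_{0}/(Rc(n+2))$, from which taking the $(-n/2)$-th power delivers the stated formula. The only point requiring any care --- and the nearest thing to an obstacle --- is verifying that the relevant $f_{0}$ actually lies in the range of $f$, i.e.\ that $f_{0}>0$; this is automatic once one recognises that $v_{0}$ must represent a physical specific volume and is therefore positive, so the closed-form expression for $c$ in terms of $\sigma_{0}$ enters only as a positive constant and does not affect the inversion.
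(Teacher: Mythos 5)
Your proposal is correct and follows the natural (and essentially only) route: the paper itself states this theorem without proof, but its preceding display $f(v)=Rc\left(\tfrac{n}{2}+1\right)v^{-2/n}$ makes clear that the intended argument is exactly your strict-monotonicity-of-a-power-function observation followed by elementary algebra. Your added remark that $f_{0}$ must be positive to lie in the range of $f$ is a small but legitimate point of care that the paper leaves implicit.
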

\subsubsection{van der Waals gases}
For van der Waals gases we have the following result \cite{LRNon}:
\begin{theorem}The function $f(v)$ is invertible if the specific entropy constant $c=\exp(3\sigma_{0}/4n)$ satisfies the inequality:
\begin{equation}\label{inv}c>\frac{1}{4\alpha}(1+\alpha)^{1+\alpha}(2-\alpha)^{2-\alpha},\end{equation}
where $\alpha=1+2/n$.
\end{theorem}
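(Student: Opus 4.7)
The plan is to show that under the stated inequality the derivative $f'(v)$ does not vanish on the physical domain, so $f$ is strictly monotonic and hence invertible. Since $f(v)=\int v\,p'(v)\,dv$ we have $f'(v)=v\,p'(v)$, and a direct differentiation of the van der Waals pressure $p(v)=8c/(3v-1)^{1+2/n}-3/v^{2}$ gives
\begin{equation*}
f'(v)=\frac{6}{v^{2}}-\frac{24\,c\,\alpha\, v}{(3v-1)^{\alpha+1}}
=\frac{6}{v^{2}}\left(1-\frac{c}{g(v)}\right),\qquad
g(v):=\frac{(3v-1)^{\alpha+1}}{4\alpha\, v^{3}},
\end{equation*}
where $\alpha=1+2/n$. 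Thus the monotonicity of $f$ is equivalent to the condition $c\neq g(v)$ throughout the physical range $v>1/3$; moreover, since the sign of $f'$ depends on whether $c$ is above or below $g$, global invertibility follows from the strict inequality $c>\max g$ (which will produce $f'<0$ everywhere).

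The second step is to locate the maximum of $g$. Logarithmic differentiation yields
\begin{equation*}
\frac{g'(v)}{g(v)}=\frac{3(\alpha+1)}{3v-1}-\frac{3}{v}=\frac{3\bigl[(\alpha-2)v+1\bigr]}{v(3v-1)},
\end{equation*}
so (assuming $n>2$, i.e.\ $\alpha<2$) the unique critical point in the physical domain is $v_{\ast}=1/(2-\alpha)$. A glance at the boundary behaviour confirms that this is a global maximum: $g(v)\to 0$ as $v\to 1/3^{+}$ because $(3v-1)^{\alpha+1}\to 0$, and $g(v)\to 0$ as $v\to\infty$ since $\alpha+1<3$. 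Hence $\max_{v>1/3}g(v)=g(v_{\ast})$.

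It remains to evaluate $g(v_{\ast})$ and compare. From $v_{\ast}=(2-\alpha)^{-1}$ one gets $3v_{\ast}-1=(1+\alpha)/(2-\alpha)$ and $v_{\ast}^{3}=(2-\alpha)^{-3}$, so
\begin{equation*}
g(v_{\ast})=\frac{(1+\alpha)^{\alpha+1}(2-\alpha)^{-(\alpha+1)}}{4\alpha\,(2-\alpha)^{-3}}
=\frac{(1+\alpha)^{1+\alpha}(2-\alpha)^{2-\alpha}}{4\alpha},
\end{equation*}
which is precisely the right-hand side of~(\ref{inv}). Therefore the hypothesis $c>g(v_{\ast})$ forces $c/g(v)>1$ for every admissible $v$, so $f'(v)<0$ throughout, $f$ is strictly decreasing, and hence invertible. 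The only delicate point is the verification that the critical point $v_{\ast}$ is a global (not just local) maximum of $g$; this is where the boundary analysis at $v\to 1/3^{+}$ and $v\to\infty$ is essential, together with the restriction $\alpha<2$. The remaining manipulations are routine algebra.
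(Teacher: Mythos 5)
Your proposal is correct. The paper itself gives no proof of this theorem (it is quoted from reference [9], Lychagin--Roop), so there is nothing to compare line by line, but your argument is the natural one behind the stated bound and it checks out: $f'(v)=vp'(v)=\frac{6}{v^{2}}\bigl(1-c/g(v)\bigr)$ with $g(v)=(3v-1)^{\alpha+1}/(4\alpha v^{3})$ is consistent with the explicit $f(v)$ given later in the paper, the unique critical point $v_{*}=(2-\alpha)^{-1}$ lies in the physical range $v>1/3$, the boundary limits force it to be the global maximum, and $g(v_{*})$ evaluates exactly to the right-hand side of (\ref{inv}), so $c>g(v_{*})$ gives $f'<0$ everywhere and hence invertibility. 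The one caveat you correctly flag yourself --- the argument needs $\alpha<2$, i.e.\ $n>2$ --- is implicit in the theorem's statement anyway, since $(2-\alpha)^{2-\alpha}$ presupposes $2-\alpha>0$; it would be worth stating that restriction explicitly at the outset rather than parenthetically.
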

Thus, if condition (\ref{inv}) holds, the zeroth order term is uniquely determined, otherwise, equation (\ref{zero}) has a number of solutions.

\par\bigskip\noindent
{\bf Acknowledgment.} This work was supported by the Russian Foundation for Basic Research (project 18-29-10013).

\bibliographystyle{amsplain}


\end{document}